    \newtheorem{lemma}{Lemma}
    \newtheorem{theorem}{Theorem}
\newcommand{\sv}{{\mathcal V}}
\newcommand{\se}{{\mathcal E}}
\newcommand{\I}{\mathcal I}
\newcommand{\sca}{\$}
\newcommand{\scb}{{\circ}}
\title{{\sc Consensus Patterns} parameterized by input string length is W[1]-hard.}
\author{Laurent Bulteau}
\begin{document}
\maketitle

We consider the {\sc Consensus Patterns} problem, where, given a set of input strings, one is asked to extract a long-enough pattern which appears (with some errors) in all strings. Formally, the problem is defined as follows:
\begin{quote}
{\sc Consensus Patterns}\\
{\bf Input:} Strings $S_1,\ldots S_n$ of length at most $\ell$, integers $m$ and $d$.\\
{\bf Output:} Length-$m$ string $S$ and integers $(j_1,\ldots, j_n)$ such that $\sum_{i=1}^n \mathrm{Ham}(S,  S_i[j_i..j_i+m-1]) \leq d$
\end{quote}
Where $\mathrm{Ham}()$ denotes the Hamming distance and $S[a..b]$ is the substring of $S$ starting in $a$ and ending in $b$. This problem is one of many variations of the well-studied {\sc Consensus String} problem. It is similar to {\sc Consensus Substring} in that the target string must be close to a substring of each input string (rather than the whole string). However, in the latter problem the distance to \emph{each} input string is bounded, rather than the sum of the distances in our case. 

We look at this problem from the parameterized complexity viewpoint, more precisely for parameter $\ell$. Recall that {\sc Consensus Substring} is FPT for parameter $\ell$ \cite{DBLP:journals/tcs/EvansSW03}. See~\cite{DBLP:journals/eatcs/BulteauHKN14} for an overview of the variants of {\sc Consensus String}, and~\cite{DBLP:conf/mfcs/Schmid15} for recent advances on parameterized aspects of {\sc Consensus Substring} and {\sc Consensus Patterns}. We prove the following result.

\begin{theorem}
{\sc Consensus Patterns}$(\ell)$ is W[1]-hard.
\end{theorem}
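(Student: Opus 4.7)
The plan is to give a parameterized reduction from \textsc{Multicolored Clique}, parameterized by the number of color classes $k$ and known to be W[1]-hard. Given $(G, V_1, \ldots, V_k)$, I will build a \textsc{Consensus Patterns} instance whose input strings all have length at most $\ell = f(k)$, so that $\ell$ depends only on $k$. The consensus $S$ of length $m = \Theta(k^2)$ will have its positions split into $k$ ``color'' slots (each encoding the chosen vertex of one class) and $\binom{k}{2}$ ``pair'' slots (each encoding the chosen edge of a color pair), over an alphabet polynomial in $|V|$ where each vertex and each edge of $G$ has its own character, together with plenty of ``filler'' characters unique to their occurrence. I will use three families of short input strings: (a) \emph{selection strings}, one per vertex $v \in V_i$, placing $v$ at color slot $i$ and unique fillers elsewhere, so that the plurality at each color slot forces a vertex of the right class; (b) \emph{edge strings}, one per edge $e = \{u, v\} \in E$ with $u \in V_i, v \in V_j$, placing $u, v$ at the appropriate color slots and an edge character $\chi_e$ at the corresponding pair slot, rewarding the consensus whenever it encodes a real edge of $G$; and (c) \emph{consistency gadgets}---longer strings of length $O(k^2)$ whose shifts encode different edge choices per pair---coupling the color slots with the pair slots to enforce that edges selected at pair slots use endpoints consistent with the vertices selected at color slots.

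The main obstacle, and the technical heart of the reduction, is that Hamming distance is additive: a naive combination of families (a) and (b) would reward vertex selections of highest total degree rather than cliques, and ``maximum-total-degree'' selections are polynomially computable, which would kill the W[1]-hardness argument. To neutralize this degree bias, I would either preprocess the \textsc{Multicolored Clique} instance into a bi-regular one (where every vertex of $V_i$ has the same number of neighbors in $V_j$ for every $i \neq j$; still W[1]-hard via a standard padding argument), or add balancing ``pseudo-edge'' strings per vertex to equalize the degree reward. Under this balancing, the contribution of (b) to the total distance becomes a fixed constant, and the whole discrimination between clique and non-clique consensuses is carried by the consistency gadgets in (c): each such gadget is designed so that the minimum over its shifts is strictly smaller exactly when the vertex pair $(c_i, c_j)$ read at color slots $i, j$ is in fact an edge of $E(V_i, V_j)$, and the total savings reach their maximum only when all $\binom{k}{2}$ pair slots are simultaneously consistent with the $k$ color slots.

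Setting the distance threshold $d$ to the cost achieved by a clique-encoding consensus completes the construction: the \textsc{Consensus Patterns} instance is a YES-instance iff $G$ contains a multicolored clique. The ``if'' direction holds by construction, and the ``only if'' direction is argued by back-engineering the consensus characters and shifts into a valid clique of $G$. Since every input string has length $O(k^2) = f(k)$ and the construction is polynomial in $|V|$, this yields a valid parameterized reduction, establishing W[1]-hardness of \textsc{Consensus Patterns}$(\ell)$.
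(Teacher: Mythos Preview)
Your proposal is a high-level plan rather than a proof, and the crucial component---the ``consistency gadgets'' of family (c)---is never actually constructed. You describe them as ``longer strings of length $O(k^2)$ whose shifts encode different edge choices per pair,'' but the number of edges between two color classes $V_i,V_j$ is polynomial in $n$, not bounded by any function of $k$; if distinct shifts of a single gadget string are to correspond to distinct edges of $E(V_i,V_j)$, that string must have length at least $m+|E(V_i,V_j)|-1$, which destroys the requirement $\ell=f(k)$. Without a concrete gadget whose length is bounded in $k$ and whose best alignment is strictly cheaper exactly when $(c_i,c_j)\in E$, the reduction does not go through---and this is precisely where all the difficulty of the problem lies. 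Families (a) and (b), the pair slots, and the bi-regularization are scaffolding around this missing core (indeed, if (c) really tests $(c_i,c_j)\in E$ directly, the pair slots and family (b) become redundant).

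By contrast, the paper's construction is dramatically simpler and sidesteps every obstacle you identify. The consensus has length only $k+1$ (an anchor $\$$ followed by one vertex per color), with no pair slots. Vertex strings $\mathcal V_i=\$\,v_{1,i}\cdots v_{k,i}$, repeated $N=m(k+2)+1$ times, force the consensus to have the form $\$\,v_{1,i_1}\cdots v_{k,i_k}$. Each edge string $\mathcal E_j$ has length $k+2$---exactly one more than the consensus---so it admits two alignments: one matching the anchor $\$$ (always distance $k$), and one shifted by one so that the two endpoint symbols of $e_j$ sit over their color positions (distance $k-1$ if both endpoints are selected, at least $k$ otherwise). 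The minimum over the two alignments is therefore $k-[\text{both endpoints of }e_j\text{ lie in }K]$, so the total edge contribution is $mk-|E(K)|$ with no degree bias at all: the $\$$-aligned option acts as a uniform fallback of cost $k$, making bi-regularization and balancing strings unnecessary. The threshold $d=N(n-1)k+mk-\binom{k}{2}$ is then met iff $K$ is a multicolored clique, and every string has length at most $\ell=k+2$.
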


By reduction from {\sc Multi-Colored Clique}. We are given a graph $G=(V,E)$, with a partition (coloring) $V=V_1 \cup V_2 \cup \ldots \cup V_k$, such that no edge has both endpoints of the same color. Assume that $|V_h|=n$ for all $h\in [k]$. Write $V_h=\{v_{h,1}, v_{h,2}, \ldots v_{h,n}\}$, i.e. each vertex has an index depending both on its color and its rank within its color. Let $m=|E|$. {\sc Multi-Colored Clique} is W[1]-hard for parameter $k$ \cite{downey2012parameterized}. See Figure~\ref{fig:reduction} for an example of the reduction.

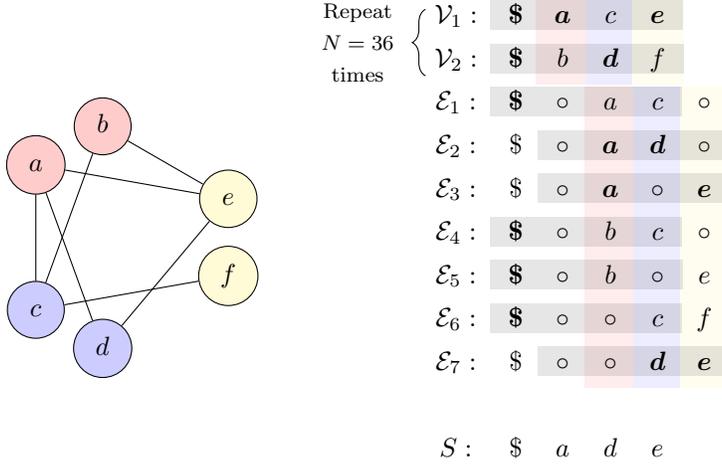
\begin{figure}
\begin{tikzpicture}[baseline={([yshift=-.5ex]current bounding box.center)}]

\node[draw, circle, fill=red!20] (a) at (140:1.5) {\strut $a$};
\node[draw, circle, fill=red!20] (b) at (100:1.5) {\strut $b$};
\node[draw, circle, fill=blue!20] (c) at (220:1.5) {\strut $c$};
\node[draw, circle, fill=blue!20] (d) at (260:1.5) {\strut $d$};
\node[draw, circle, fill=yellow!20] (e) at (20:1.5) {\strut $e$};
\node[draw, circle, fill=yellow!20] (f) at (-20:1.5) {\strut $f$};

\draw (a) --(c);
\draw (a) --(d);
\draw (a) --(e);
\draw (b) --(c);
\draw (b) --(e);
\draw (d) --(e);
\draw (c) --(f);
\begin{scope}[xshift=6cm]
\matrix (m) [matrix of math nodes, row sep=0.8em,
column sep=0.5em, text height=0.4ex, text depth=0ex] {
\sv_1:&\boldsymbol \sca&\boldsymbol a&c&\boldsymbol e\\
\sv_2:&\boldsymbol \sca&b&\boldsymbol d&f\\
\se_1:&\boldsymbol \sca&\scb&a&c&\scb\\
\se_2:&\sca&\scb&\boldsymbol a&\boldsymbol d&\scb\\
\se_3:&\sca&\scb&\boldsymbol a&\scb&\boldsymbol e\\
\se_4:&\boldsymbol \sca&\scb&b&c&\scb\\
\se_5:&\boldsymbol \sca&\scb&b&\scb&e\\
\se_6:&\boldsymbol \sca&\scb&\scb&c&f\\
\se_7:&\sca&\scb&\scb&\boldsymbol d&\boldsymbol e\\
&\ \\
S:&\sca&a&d&e\\
};
\draw [decorate,decoration={brace,amplitude=5pt},xshift=-6pt,yshift=4pt](m-2-1.south west) --(m-1-1.north west)  node [black,midway,xshift=-0.9cm, align=center] {\footnotesize Repeat\\\footnotesize  $N=36$\\\footnotesize  times};
\begin{pgfonlayer}{background}[]
        \node [fill=red!7,    fit=(m-1-3.north west) (m-2-3.south east)] {};
        \node [fill=blue!7,   fit=(m-1-4.north west) (m-2-4.south east)] {};
        \node [fill=yellow!7, fit=(m-1-5.north west) (m-2-5.south east)] {};
        \node [fill=red!7,    fit=(m-3-4.north west) (m-9-4.south east)] {};
        \node [fill=blue!7,   fit=(m-3-5.north west) (m-9-5.south east)] {};
        \node [fill=yellow!7, fit=(m-3-6.north west) (m-9-6.south east)] {};
        
        \node [fill=black,opacity=0.1,    fit=(m-1-2.north west) (m-1-5.east)] {};
        \node [fill=black,opacity=0.1,    fit=(m-2-2.north west) (m-2-5.east)] {};
        \node [fill=black,opacity=0.1,    fit=(m-3-2.north west) (m-3-5.east)] {};
        \node [fill=black,opacity=0.1,    fit=(m-4-3.north west) (m-4-6.east)] {};
        \node [fill=black,opacity=0.1,    fit=(m-5-3.north west) (m-5-6.east)] {};
        \node [fill=black,opacity=0.1,    fit=(m-6-2.north west) (m-6-5.east)] {};
        \node [fill=black,opacity=0.1,    fit=(m-7-2.north west) (m-7-5.east)] {};
        \node [fill=black,opacity=0.1,    fit=(m-8-2.north west) (m-8-5.east)] {};
        \node [fill=black,opacity=0.1,    fit=(m-9-3.north west) (m-9-6.east)] {};

\end{pgfonlayer}
\end{scope}
\end{tikzpicture}

\caption{\label{fig:reduction} Illustration of the parameterized reduction from an instance of {\sc$k$-Colored Clique} (left) to {\sc Consensus Patterns} using the string length as a parameter (right). An optimal solution $S=\sca ade$ and its alignment with each input string is given (positions producing a match are in bold). Note that vertices $\{a,c,e\}$ form a clique in $G$.
}
\end{figure}

We build an alphabet $\Sigma$ containing $V$ (i.e., one symbol per vertex) and two special characters $\sca$ and $\scb$.

Define string $\sv_i= \sca v_{1,i} v_{2,i}\ldots v_{k,i}$. Let $e=(v_{h,i},v_{h',i'})$ be the $j$th edge of $E$, $j\in [m]$. 
Define $\se_j$ as the string starting with $\sca$, followed by $k+1$ characters: all $\scb$, except for two positions: $\se_j[k+h+1]=v_{h,i}$ and $\se_j[h'+2]=v_{h',i'}$.

Let $N=m(k+2)+1$. The instance $\I$ of {\sc Consensus Patterns} contains $N$ occurrences of strings $\sv_i$, $i\in [n]$, and one occurrence of strings $\se_j$, $j\in[m]$. The target length is $\mathrm m=k+1$.

Note that due to the large value of $N$, any solution $S$ must have a minimal distance to the set of strings $\{\sv_i\mid i\in [n]\}$. Otherwise, (if it is, say, at the minimum distance plus one), the distance to the whole instance $\I$ increases by at least $N$, which cannot be compensated by the remaining strings $\se_j$ (which have size $m(k+2)<N$). Hence we first enumerate the optimal solutions for the set $\{\sv_i\mid i\in [n]\}$.

\begin{lemma}
The \emph{Consensus Patterns} of $\{\sv_i\mid i\in [n]\}$ (i.e., the strings of length $k+1$ at minimum  total distance from strings $\sv_i$) are the strings of the form $S=\sca v_{1, i_1} \ldots v_{k, i_k}$ with $i_1,\ldots, i_k\in [n]$. Such a string has a total distance of $(n-1)k$.
\end{lemma}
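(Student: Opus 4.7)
The plan is to exploit the fact that each $\sv_i$ has length exactly $k+1$, which equals the target length $m$. Consequently, for each $i$ there is only one valid alignment, namely $j_i=1$, and the total distance to minimize reduces to
\[
D(S)=\sum_{i=1}^{n}\mathrm{Ham}(S,\sv_i).
\]
Because Hamming distance is additive over positions, I would rewrite this as $D(S)=\sum_{p=1}^{k+1}c_p(S[p])$, where $c_p(x)$ counts the number of $\sv_i$ whose $p$-th character differs from $x$. Thus optimizing $D$ reduces independently to choosing, at each position $p$, a character that appears most frequently among the $\sv_i[p]$.

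Next I would examine the columns of the multiset $\{\sv_i\mid i\in[n]\}$ one by one. Column $1$ is constant equal to $\sca$, so the unique optimal choice is $S[1]=\sca$, contributing $0$ to the distance. For column $h+1$ with $h\in[k]$, the characters are $v_{h,1},v_{h,2},\ldots,v_{h,n}$; by construction of $\Sigma$ these $n$ symbols are pairwise distinct (and distinct from $\sca$, $\scb$, and from vertices of other colors), so every character appears with multiplicity at most $1$. Therefore, picking any character outside $V_h$ yields $n$ mismatches in that column, while picking any $v_{h,i}$ yields exactly $n-1$ mismatches, which is optimal.

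Combining both observations, the optimal strings are precisely those of the form $S=\sca\,v_{1,i_1}v_{2,i_2}\cdots v_{k,i_k}$ with $(i_1,\ldots,i_k)\in[n]^k$, and each such $S$ has total distance $0+k(n-1)=(n-1)k$. The main thing to be careful about is the lower bound direction: I must justify that \emph{every} optimal $S$ has this form, not just that strings of this form are optimal. This follows from the strict inequality $n-1<n$ in each non-constant column, which forces $S[h+1]\in V_h$ in any optimum, together with the strict inequality $0<n$ forcing $S[1]=\sca$. No nontrivial obstacle arises; the proof is essentially a column-wise counting argument.
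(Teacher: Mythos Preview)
Your proof is correct and follows essentially the same approach as the paper's: both observe that $|\sv_i|=k+1$ forces a unique alignment, reduce to a column-wise majority argument, and read off the optimal characters at each position. Your version is simply more explicit about the column counts and the lower-bound direction, which the paper leaves implicit.
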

\begin{proof}
Since all strings in $\{\sv_i\mid i\in [n]\}$ have length $k+1$, any consensus pattern $S$ must be aligned with $\sv_i$ from the very first character. Hence $S$ is a consensus string of $\{\sv_i\mid i\in [n]\}$. The consensus strings of this set are obtained by taking the majority character at each position. Thus, $S[1]=\#$, and, for all $h\in[k]$, there exists $i_h$ such that $S[h+1]=\{v_{h,i_h}\}$.
\end{proof}

Consider now an optimal solution $S$ for $\I$. Let $\{i_h\mid h\in[k]\}$ be the set of indices as obtained from the lemma above. We show that the set of vertices $K=\{v_{h,i_h}\mid h\in[k]\}$ forms a clique of $G$ iff the distance is below a certain threshold. To this end, we compute the best possible alignment between $S$ and each string $\se_j$.

\begin{lemma}
Let $j\in [m]$. If both endpoints of edge $e_j$ are in $K$ then there exists an alignment of $S$ at distance $k-1$ from $\se_j$, otherwise the best possible alignment has distance $k$.
\end{lemma}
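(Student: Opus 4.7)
The plan is to enumerate the two possible alignments of $S$ against $\se_j$ and compute the Hamming distance in each. Since $|S|=k+1$ and $|\se_j|=k+2$, only two alignments exist: $S$ matched to positions $1,\dots,k+1$ of $\se_j$ (left-aligned), or to positions $2,\dots,k+2$ (right-aligned).

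First, I would handle the left-aligned case. The two leading $\sca$'s match. At each subsequent position $p\in\{2,\dots,k+1\}$, the character $S[p]=v_{p-1,i_{p-1}}$ is a vertex of color $p-1$, while $\se_j[p]$ is either $\scb$ or an endpoint of $e_j$ of color $h$ or $h'$ (neither equal to $p-1$, by the one-position offset between the encoding positions in $\sv_i$ and in $\se_j$). All these positions therefore mismatch, so the left-aligned distance is exactly $k$, regardless of $K$.

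Second, I would handle the right-aligned case. Writing $e_j=(v_{h,i},v_{h',i'})$ with $h<h'$, the shift aligns $S[h+1]=v_{h,i_h}$ against $\se_j[h+2]=v_{h,i}$ and $S[h'+1]=v_{h',i_{h'}}$ against $\se_j[h'+2]=v_{h',i'}$: each such pair matches iff $i_h=i$ (resp.\ $i_{h'}=i'$), i.e.\ iff the corresponding endpoint of $e_j$ lies in $K$. Every other position pairs $\sca$ with $\scb$ (at $\se_j[2]$) or a vertex symbol of $S$ with $\scb$, and therefore mismatches. If $t\in\{0,1,2\}$ counts the endpoints of $e_j$ in $K$, the right-aligned distance is $(k+1)-t$.

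Putting the two cases together yields the claim: when $t=2$, right-alignment attains $k-1$, beating the left-alignment's $k$; when $t\le 1$, right-alignment gives at least $k$, so the left-alignment at distance $k$ is optimal. The main step requiring care is the index bookkeeping across the shift, together with the observation that the one-position color offset between $\sv_i$ and $\se_j$ rules out accidental matches of distinct-color vertex symbols in the left-aligned case; I do not expect any deeper obstacle.
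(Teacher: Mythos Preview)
Your proposal is correct and follows essentially the same approach as the paper: both proofs enumerate the two possible alignments, show the left alignment always has distance exactly $k$ (one match at the leading $\sca$, no other matches because of the one-position color offset), and show the right alignment has distance $k+1-t$ where $t$ is the number of endpoints of $e_j$ in $K$. Your index bookkeeping is in fact slightly cleaner than the paper's, which contains a couple of minor off-by-one slips in the second-alignment analysis, but the underlying argument is identical.
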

\begin{proof}
\newcommand{\s}{\mathrm s}
\renewcommand{\t}{\mathrm t}
Let $h_\s,h_\t,i_\s,i_\t$ be such that $e_j=(v_{h_\s,i_\s}, v_{h_\t,i_\t})$. 
There are two possible alignments of $S$ with $\se_j$: $S[1]$ is aligned either with $\se_j[1]$ or with  $\se_j[2]$. We compute the distance in both cases.

If $S[1]$ is aligned with $\se_j[1]$, then there is exactly one common character, namely $S[1]=\se_j[1]=\sca$. Indeed, for all $h\in[k]$, $S[h+1]\in V_h$ and $\se_j[h+1]\in V_{h-1}\cup\{x\}$, hence these two characters are different. The distance in this case is $k$.

If $S[1]$ is aligned with $\se_j[2]$, then first note that $S[1]=\sca \neq x= \se_j[2]$. 
Consider  index $h_{\s}$.  If $i_\s=i_{h_\s}$, then $S[{h_\s}]=v_{{h_\s},i_{h_\s}}=\se_j[{h_\s}+1]$, otherwise $S[{h_\s}] \neq \scb =\se_j[{h_\s}+1]$. 
Similarly for $h_\t$, $S[{h_\t}]=\se_j[{h_\t}+1]$ iff $i_\t=i_{h_\t}$. For other values of $h$ (i.e. $h\in [k]\setminus\{h_\s, h_\t\}$), $S[h]\neq \scb = \se_j[h+1]$. The distance is thus $k-1$ iff $i_\s=i_{h_\s}$ and $i_\t=i_{h_\t}$, it is at least $k$ otherwise.

Overall, if $i_\s=i_{h_\s}$ and $i_\t=i_{h_\t}$ the optimal alignment has distance $k-1$, otherwise the optimal alignment has distance $k$. 

\end{proof}

We can now conclude the proof. Let $S$ be an optimal solution of {\sc Consensus Pattern} for instance $\I$ and $K$ its corresponding set of vertices. The distance from $S$ to the $N$ copies of strings $\sv_i$ is $N(n-1)k$. The distance between $S$ and $\se_j$ is $k-1$ if both endpoints of $e_j$ are in $K$, and $k$ otherwise. $|E(K)|$ is the number of edges with both endpoints in $K$: the total distance from $S$ to strings $\se_j$ is thus $mk - |E(K)|$, and the total distance from $S$ to $\I$ is $N(n-1)k+mk-|E(K)|$. Overall, the optimal distance is at most $N(n-1)k+mk-\frac{k(k-1)}2$ if, and only if, $G$ contains a size-$k$ set of vertices $K$ with $|E(K)|\geq\frac{k(k-1)}2$, i.e. if $G$ contains a clique. 
\bibliographystyle{plain}
\bibliography{biblio}
\end{document}